\newcommand{ \myqed }{ \hfill $\blacktriangle$ }
\def\moverlay{\mathpalette\mov@rlay}
\def\mov@rlay#1#2{\leavevmode\vtop{%
   \baselineskip\z@skip \lineskiplimit-\maxdimen
   \ialign{\hfil$\m@th#1##$\hfil\cr#2\crcr}}}
\newcommand{\charfusion}[3][\mathord]{
    #1{\ifx#1\mathop\vphantom{#2}\fi
        \mathpalette\mov@rlay{#2\cr#3}
      }
    \ifx#1\mathop\expandafter\displaylimits\fi}
\theoremstyle{plain}
\newtheorem{lemma}{Lemma}
\newtheorem{theorem}[lemma]{Theorem}
\newtheorem{remark}[lemma]{Remark}
\begin{document}

\author{Mladen Kova\v{c}evi\'c} %\orcidlink{0000-0002-2395-7628}
\email{kmladen@uns.ac.rs}
\affiliation{Faculty of Technical Sciences, University of Novi Sad, Serbia}

\author{Iosif Pinelis} %\orcidlink{0000-0003-4742-5789}
\affiliation{Department of Mathematical Sciences, Michigan Technological University, USA}

\author{Marios Kountouris} %\orcidlink{0000-0003-1143-080X}
\affiliation{Department of Communication Systems, EURECOM, France}
\affiliation{Department of Computer Science and Artificial Intelligence, University of Granada, Spain}

\title{\Large An information-theoretic analog of the twin paradox}

%\date{\today}

\begin{abstract}
We revisit the familiar scenario involving two parties in relative motion, in
which Alice stays at rest while Bob goes on a journey at speed $\beta c$ along
an arbitrary trajectory and reunites with Alice after a certain period of time.
It is a well-known consequence of special relativity that the time that passes
until they meet again is different for the two parties and is shorter in Bob's
frame by a factor of $\sqrt{1-\beta^2}$.
We investigate how this asymmetry manifests from an information-theoretic viewpoint.
Assuming that Alice and Bob transmit signals of equal average power to each other
during the whole journey, and that additive white Gaussian noise is present at
both sides, we show that the maximum number of bits per second that Alice can
transmit reliably to Bob is always higher than the one Bob can transmit to Alice.
Equivalently, the energy per bit invested by Alice is lower than that invested
by Bob, meaning that the traveler is less efficient from the communication perspective,
as conjectured by Jarett and Cover.
\end{abstract}

\keywords{Doppler effect, special relativity, information theory, channel capacity, Gaussian noise}

%\pacs{89.70.Kn, 03.30.+p, 02.50.Cw}

\maketitle

\section{Introduction}

The process of communication involves representing the message to be transmitted
by a physical quantity that the intended recipient can measure and thereby infer
the message.
As Shannon showed in his pioneering work \cite{shannon}, in most cases of interest,
messages can be encoded in the chosen physical quantity in such a way that the
receiver can recover them with accuracy as high as desired, even in the presence
of noise and other signal distortions, while the information rate -- the average
number of transmitted bits per second -- remains bounded away from zero.
For a given level of noise, however, there is a natural upper bound on the number
of bits per second that can be reliably transmitted to the receiver, and this upper
bound depends on the physical parameters such as signal power, signal duration, and
frequency bandwidth.
Since the transmitter and the receiver are located at different points in space-time,
and potentially in different reference frames, the fact that all of these quantities
are relative, i.e., observer-dependent, must in general be taken into account when
attempting to determine the fundamental limits of communication systems, or to develop
practical systems that operate as desired.

In this work, we study the limits of (classical) information transfer within the
framework of special theory of relativity~\cite{einstein, rindler}, more specifically
within the model introduced by Jarett and Cover~\cite{jarett+cover}.
The mentioned paper investigates how the inherent asymmetry between the two observers
in the famous twin problem~\cite{einstein, rindler, darwin}, which implies that the
traveler will be younger than his stay-at-home twin upon his return from the journey,
manifests in an information-theoretic setting where it is assumed that the twins are
communicating with each other during the whole journey.
Among the very few works continuing this line of research, some have studied the
same model from the aspect of security in the presence of an eavesdropper~\cite{garcia},
while others extended the model to multiuser settings~\cite{garcia2, ren+xu}.
We continue the work initiated in~\cite{jarett+cover} along a different direction and
focus on an intriguing conjecture made therein, stating that, under any symmetric
transmission constraints, the traveling twin is always less efficient from the
communication perspective.
%in the sense that he needs to invest more energy per transmitted bit.
This statement, which can viewed as an information-theoretic version of the twin
paradox, was proved in~\cite{jarett+cover} in several special cases.
Its most general form, however, appears to be a highly nontrivial problem.

\textit{Results:}
Our main result is a proof of the conjecture just mentioned, under the assumption
that the traveler is moving at constant, but otherwise arbitrary speed, along an
arbitrary closed trajectory.
On the technical side, our contribution consists of recasting the problem into a
probabilistic statement and applying the results of extremal probability theory.
Along the way we prove an inequality that is more general than what the conjecture
requires, and that may be of separate interest in information theory
(see Theorem~\ref{thm:inequality} and Remark~\ref{rem:KL}).%

\textit{Relevance and context:}
Relativistic effects on information transmission and processing tasks have been
analyzed in various contexts, both in quantum and classical domains; see, e.g.,
\cite{checinska+dragan, jonsson, kent, kent2, peres+terno, peres+terno2}.
Studying this interplay is of great theoretical value \cite{peres+terno2}, while
also having direct applications in GPS \cite{ashby2} and other distributed
systems~\cite{messerschmitt}.
We view this work as part of this general quest to understand the interplay
between information theory and the theory of relativity, and, more specifically,
to understand the fundamental limits of communication in relativistic settings.
The concrete scenario studied in the paper is convenient for this purpose as the
twin paradox is one of the most illustrative and counter-intuitive consequences
of special relativity.

\section{Model description and problem formulation}

In this section we describe the model introduced in~\cite{jarett+cover},
which is adopted in this paper as well, and state some of the results obtained
in~\cite{jarett+cover} that are relevant for the specific problem we intend to
study.

\subsection{The twin problem and the Doppler effect}

We consider the standard ``twin'' scenario \cite[Sec.\ 3.5]{rindler}: two observers
in Minkowski space, of which one, called Alice, is assumed to be stationary, while
the other, called Bob, goes on a journey and reunites with Alice after a certain
period of time.
(The assumption that Bob's initial and final point coincide with Alice is in fact
not necessary; everything said in the sequel holds for any closed trajectory.)
Bob's trajectory is an arbitrary piecewise smooth closed curve, with finitely many
singular points, i.e., cusps, at which the velocity vector is discontinuous.
We emphasize that singular points are allowed only for the purpose of generality
and do not affect the analysis in any way.
Suppose that the ideal clocks belonging to Alice and Bob, both set to $0$ at the
start of the journey, are reading $ T_A $ and $ T_B $, respectively, when the journey
is completed, and let
\begin{equation}
\label{eq:gamma}
  \gamma \coloneqq \frac{T_A}{T_B} .
\end{equation}
(`$ \coloneqq $' means that the symbol on the left-hand side is \emph{defined}
by the expression on the right-hand side.)
The so-called twin paradox is the statement that $\gamma > 1$.

Let $ \alpha_A(t) $ denote the Doppler factor observed by Bob for a pulse transmitted
by Alice at time $ t $ in her frame (i.e., a photon of frequency $ \nu $ emitted by
Alice at time $ t $ is received by Bob as a photon of frequency $ \nu \alpha_A(t) $).\linebreak
By the definition of the Doppler factor, we know that two pulses transmitted at time
instants $ t $ and $ t + dt $ in Alice's frame will be received
\begin{equation}
\label{eq:dtdtau}
  d\tau = \frac{1}{\alpha_A(t)} dt
\end{equation}
seconds apart in Bob's frame.
Likewise, denote by $ \alpha_B(\tau) $ the Doppler factor observed by Alice
for a pulse transmitted by Bob at Bob's time $ \tau $.
Let $ \beta(\tau) c $ be Bob's speed as measured in Alice's frame at Bob's
time $ \tau $, and $ \beta_r(\tau) $ the radial component of $ \beta(\tau) $
in Alice's frame.
Then, since no gravitational fields exist in the vicinity of either of the observers,
the Doppler factors can be expressed as \cite[Sec.\ 4.3]{rindler},\cite{jarett+cover}
\begin{subequations}
\label{eq:doppler}
\begin{align}
  \alpha_A(t)    &= \frac{1 - \beta_r(\tau_B(t))}{\sqrt{1 - \beta^2(\tau_B(t))}} ,  \\
	\alpha_B(\tau) &= \frac{\sqrt{1 - \beta^2(\tau)}}{1 + \beta_r(\tau)} .
\end{align}
\end{subequations}
Here $ \tau_B(t) $ denotes the moment of reception in Bob's frame of a pulse sent
at moment $ t $ in Alice's frame, which can be written in the form (see \eqref{eq:dtdtau})
\begin{equation}
\label{eq:tauprim}
  \tau_B(t) = \int_{0}^t \frac{1}{\alpha_A(u)} \,du .
\end{equation}
%(We have assumed, without loss of generality, that communication starts
%at moment $ 0 $ in both frames.
%This is easy to achieve by setting the clocks correspondingly upon
%transmission/reception of the first pulse.)
The following useful relations follow immediately \cite{jarett+cover}
\begin{subequations}
\label{eq:intalpha}
\begin{align}
\label{eq:intalphaA}
  \tau_B(T_A) =  \int_{0}^{T_A} \frac{1}{\alpha_A(t)} \,dt   &= T_B ,  \\
\label{eq:intalphaB}
  \int_{0}^{T_B} \frac{1}{\alpha_B(\tau)} \,d\tau   &= T_A .
\end{align}
\end{subequations}

\subsection{Communication model -- relativistic\\AWGN channel}

Suppose that Alice is sending information to Bob by using signals of instantaneous
power $ P_A(t) $, duration $ T_A $, and (approximate) frequency bandwidth $ W_A $,
all specified in her (i.e., the transmitter's) reference frame, and that additive
white Gaussian noise of power spectral density $ \eta_B $ is distorting the signal
at the receiving side.
Then the maximum transmission rate, i.e., the maximum number of bits per second
in the transmitter's frame, at which the receiver is still able to recover the
information with an arbitrarily small error probability, is given by
\begin{equation}
\label{eq:capacity}
  \overline{C}_A
	 = \frac{1}{T_A} \int_{0}^{T_A} W_A \log\!\left(1 + \alpha_A(t)\frac{P_A(t)}{\eta_B W_A} \right) dt ,
\end{equation}
where $ \log $ is the base-$ 2 $ logarithm.
This was shown in \cite{jarett+cover} and follows directly from Shannon's
theory \cite{shannon2} after taking into account appropriate relativistic
modifications of the quantities $ P_A(t), T_A, W_A $ at the receiving side.
(Strictly speaking, \eqref{eq:capacity} is the expression for the maximum
transmission rate in a channel in which the function $ \alpha_A(t) $ is
periodic of period $ T_A $, which is the case if, e.g., Alice is stationary
while Bob is moving in a periodic fashion.
Namely, the notion of channel capacity is asymptotic and in most cases it
is necessary that signal duration tends to infinity in order to achieve
arbitrarily small error probability.)

The function $ P_A(t) $ can be chosen by Alice subject to the imposed
transmission constraints, the most natural and frequently assumed constraint
being of the form
\begin{equation}
\label{eq:constraint}
  \frac{1}{T_A} \int_{0}^{T_A} P_A(t) \,dt = \overline{P}_A ,
\end{equation}
for a given $ \overline{P}_A > 0 $.
The optimal choice that maximizes the expression in \eqref{eq:capacity} is
then given by
\begin{equation}
\label{eq:power}
  P^*_A(t) = \max\!\left\{ 0, \; \lambda_A - \frac{\eta_B W_A}{\alpha_A(t)} \right\} ,
\end{equation}
where $ \lambda_A $ is chosen so that the constraint \eqref{eq:constraint}
is satisfied.
To simplify the analysis, we shall assume that the available power $ \overline{P}_A $
is sufficiently large so that transmission is possible during the whole trip,
i.e., that
\begin{equation}
\label{eq:power2}
  \lambda_A \geqslant \max_{t} \frac{\eta_B W_A}{\alpha_A(t)} .
\end{equation}

Similarly, if Bob is sending information to Alice over an additive white Gaussian
noise channel with parameters $ P_B(t), T_B, W_B, \eta_A $, we can write the same
expressions \eqref{eq:capacity}--\eqref{eq:power2} with subscripts $ A $ and $ B $
interchanged. 

We emphasize that it is assumed in the model that Bob's trajectory and velocity
are known in advance and, consequently, both parties:
1) can predict each other's position and aim correspondingly so that, ideally,
the entire signal is intercepted and no power is lost,
2) know the Doppler factors affecting the transmission at every instant and can
adjust the power appropriately so as to achieve optimal transmission strategy.

Suppose then that Alice and Bob transmit signals to each other using optimal
power adaptation, which is by \eqref{eq:power} given by
\begin{subequations}
\label{eq:powerAB}
\begin{align}
\label{eq:powerA}
  P^*_A(t) &= \overline{P}_A + \eta_B W_A \left( \frac{1}{\gamma} - \frac{1}{\alpha_A(t)} \right) ,  \\
\label{eq:powerB}
  P^*_B(t) &= \overline{P}_B + \eta_A W_B \left( \gamma - \frac{1}{\alpha_B(t)} \right) ,
\end{align}
\end{subequations}
assuming that (see \eqref{eq:power2})
\begin{subequations}
\label{eq:pnw}
\begin{align}
  \frac{\overline{P}_A}{\eta_B W_A}  &\geqslant  \max_{t} \frac{1}{\alpha_A(t)} - \frac{1}{\gamma} ,  \\
	\frac{\overline{P}_B}{\eta_A W_B}  &\geqslant  \max_{\tau} \frac{1}{\alpha_B(\tau)} - \gamma .
\end{align}
\end{subequations}
The expressions for their transmission rates then become%
\begin{subequations}
\label{eq:capacityAB}
\begin{align}
\label{eq:capacityA}
  \overline{C}_A
	  &=  \frac{W_A}{T_A} \int_{0}^{T_A} \log\!\left(\alpha_A(t) \left( \frac{\overline{P}_A}{\eta_B W_A} + \frac{1}{\gamma} \right) \right) dt ,  \\
\label{eq:capacityB}
  \overline{C}_B
	  &=  \frac{W_B}{T_B} \int_{0}^{T_B} \log\!\left(\alpha_B(\tau) \left( \frac{\overline{P}_B}{\eta_A W_B} + \gamma \right) \right) d\tau .
\end{align}
\end{subequations}

\subsection{The Jarett--Cover conjecture}

A conjecture was made in \cite[Sec.~VIII]{jarett+cover} that, everything else
being symmetric, namely
\begin{align}
\label{eq:symmetric}
  \overline{P}_A = \overline{P}_B , \quad   W_A = W_B , \quad  \eta_A = \eta_B ,
\end{align}
it always holds that
\begin{align}
\label{eq:conjecture}
  \overline{C}_A > \overline{C}_B ,
\end{align}
i.e., the maximum number of bits per second that Alice can transmit reliably
to Bob is higher than the one Bob can transmit to Alice.

The inequality \eqref{eq:conjecture} was shown in \cite[Sec.~IV and Sec.~IX-D]{jarett+cover}
for the special cases of purely circular (Bob moving on a circular orbit around
Alice) and purely radial (Bob moving away from Alice along a straight line, and
coming back the same way) constant-speed motion.
We shall demonstrate in the following section that the statement is true for an
arbitrary trajectory.
Proving this will require a different strategy and a more elaborate analysis
compared to the cases of circular and radial motion in which \eqref{eq:conjecture}
can be directly established by elementary means.

The conjecture stated above can also be phrased in terms of average energy
invested per one transmitted bit, which is a natural measure of efficiency
used in communication theory.
Denoting by $ E = \overline{P} T $ the energy of the signal and by
$ N = \overline{C} T $ the total number of transmitted bits, the energy
per bit is simply $ E / N = \overline{P} / \overline{C} $.
Assuming that $ \overline{P}_A = \overline{P}_B $, we have
\begin{equation}
  \frac{E_A/N_A}{E_B/N_B}
	 = \frac{(\overline{P}_A T_A)/(\overline{C}_A T_A)}{(\overline{P}_B T_B)/(\overline{C}_B T_B)}
	 = \frac{\overline{C}_B}{\overline{C}_A} ,
\end{equation}
so \eqref{eq:conjecture} is equivalent to saying that Alice is more efficient than Bob in the sense that
$ E_A / N_A < E_B / N_B $.

In the rest of the article, we assume that \eqref{eq:symmetric} holds
and we write these quantities without subscripts.

\section{Proof of the Jarett--Cover conjecture: Constant-speed case}

We assume throughout this section that Bob is moving at speed
$ \beta(\tau) = \beta = \textnormal{const} $ with respect to Alice, $\beta \in (0,1)$.
In this case the Doppler factors \eqref{eq:doppler} take the form
\begin{subequations}
\label{eq:doppler2}
\begin{align}
  \alpha_A(t)    &= \gamma \big(1 - \beta_r(\tau_B(t)) \big) ,   \\
  \alpha_B(\tau) &= \frac{1}{\gamma \big(1 + \beta_r(\tau) \big)} ,
\end{align}
\end{subequations}
where
\begin{equation}
  \gamma = \frac{T_A}{T_B} = \frac{1}{\sqrt{1-\beta^2}}
\end{equation}
is the aging factor \eqref{eq:gamma}.

For future reference, we note that the integral of the radial component of
the velocity, which is equal to the total displacement of Bob with respect
to Alice during the trip, must be zero since the trajectory is a closed curve,
i.e.,
\begin{equation}
\label{eq:betar}
  \int_{0}^{T_B} \beta_r(\tau) \,d\tau = 0 .
\end{equation}
Let us also denote
\begin{equation}
\label{eq:b}
  b \coloneqq \max_{\tau} |\beta_r(\tau)| .
\end{equation}
Note that $ b \leqslant \beta $, where inequality may be strict in general.
For example, if Bob's trajectory is circular with Alice in the center, the
radial component of the velocity is equal to zero at every instant, and so
$ b = 0 $.

Substituting \eqref{eq:doppler2} into \eqref{eq:capacityAB} and introducing
the change of variables $ t \mapsto \tau = \tau_B(t) $ in \eqref{eq:capacityA},
in which case $ dt = \alpha_A(t) d\tau = \gamma\big(1 - \beta_r(\tau)\big) d\tau $
(see \eqref{eq:dtdtau}), equations \eqref{eq:capacityAB} become
\begin{subequations}
\label{eq:capacity2}
\begin{align}
\nonumber
  \overline{C}_A  &=  \frac{W}{T_A} \int_{0}^{T_B} \gamma\big(1 - \beta_r(\tau)\big) \cdot  \\
\label{eq:capacity2A}
	&\phantom{xxxxxxx}\cdot\log\!\left(\big(1 - \beta_r(\tau)\big) \left( \gamma \frac{\overline{P}}{\eta W} + 1 \right) \right) d\tau ,  \\
\label{eq:capacity2B}
  \overline{C}_B  &=  \frac{W}{T_B} \int_{0}^{T_B} \log\!\left(\big(1 + \beta_r(\tau)\big)^{-1} \left( \frac{1}{\gamma}\frac{\overline{P}}{\eta W} + 1 \right) \right) d\tau ,
\end{align}
\end{subequations}
where we shall assume that (see \eqref{eq:pnw}, \eqref{eq:doppler2} and \eqref{eq:b})
\begin{align}
\label{eq:pnw2}
  \frac{\overline{P}}{\eta W}
	 \geqslant  \max\!\left\{ \frac{1}{\gamma} \frac{b}{1 - b} ,\; \gamma b \right\} .
\end{align}
To further simplify the notation, define
\begin{equation}
\label{eq:sigmaf}
  \sigma \coloneqq \frac{\overline{P}}{\eta W}
	 \quad \textnormal{and} \quad
  f(x) \coloneqq \beta_r(T_B x) , \; x \in [0,1]
\end{equation}
and note that, by \eqref{eq:betar} and \eqref{eq:b}, the function $ f $ satisfies
the conditions $ |f(x)| \leqslant b $ and $ \int_{0}^{1} f(x) dx = 0 $.
Then \eqref{eq:capacity2} becomes
\begin{subequations}
\label{eq:capacity3}
\begin{align}
\label{eq:capacity3A}
  \overline{C}_A  &=  W \int_{0}^{1} (1 - f(x)) \log\!\big((1 - f(x)) ( \gamma \sigma + 1 ) \big) \,dx ,  \\
\label{eq:capacity3B}
  \overline{C}_B  &=  W \int_{0}^{1} \log\!\big((1 + f(x))^{-1} ( \gamma^{-1} \sigma + 1 ) \big) \,dx ,
\end{align}
\end{subequations}
so the inequality $ \overline{C}_A > \overline{C}_B $ is equivalent to
\begin{multline}
\label{eq:capacity4}
   \int_{0}^{1} (1 - f(x)) \log(1 - f(x)) \,dx  +  \int_{0}^{1} \log(1 + f(x)) \,dx  \\
	 \,>\, \log\frac{\gamma^{-1} \sigma + 1}{\gamma \sigma + 1}
\end{multline}
for $ \sigma \geqslant \max\!\big\{ \frac{1}{\gamma} \frac{b}{1 - b} , \, \gamma b \big\} $.
In Theorem \ref{thm:inequality} below we state an inequality stronger than
\eqref{eq:capacity4}.
Proving the theorem will therefore complete the proof of our claim that
$ \overline{C}_A > \overline{C}_B $.

\begin{theorem}
\label{thm:inequality}
Let $ f \colon [0,1] \to \mathbb{R} $ be a function satisfying
$ |f(x)| \leqslant b < 1 $ and $ \int_{0}^{1} f(x) dx = 0 $.
Then
\begin{multline}
\label{eq:capacity5}
   \int_{0}^{1} \big( (1 - f(x)) \log(1 - f(x)) + \log(1 + f(x)) \big) \,dx  \\
	 \,>\, \log\!\left( 1 - b^3 \right) .
\end{multline}
\end{theorem}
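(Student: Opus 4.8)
The plan is to reduce \eqref{eq:capacity5} to a single affine lower bound on the integrand and then use the one constraint $\int_0^1 f=0$. Write $g(y)\coloneqq(1-y)\log(1-y)+\log(1+y)$ for $y\in(-1,1)$, so that the left-hand side of \eqref{eq:capacity5} is $\int_0^1 g(f(x))\,dx$; since \eqref{eq:capacity5} is unchanged by rescaling the logarithm, I take $\log$ to be the natural logarithm throughout. I will look for a number $\beta=\beta(b)$ such that the affine function $\ell(y)\coloneqq\log(1-b^3)+\beta y$ satisfies $\ell(y)\le g(y)$ on $[-b,b]$ with equality \emph{only} at the left endpoint $y=-b$. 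Granting this, for any admissible $f$ (with $b>0$; the case $b=0$ forces $f\equiv0$ and is degenerate) one gets
\[
  \int_0^1 g(f(x))\,dx\;\ge\;\int_0^1 \ell(f(x))\,dx\;=\;\log(1-b^3)+\beta\!\int_0^1 f(x)\,dx\;=\;\log(1-b^3),
\]
and this is strict because the pointwise inequality $\ell\le g$ is an equality only at $-b$, so equality above would force $f\equiv-b$ a.e., contradicting $\int_0^1 f=0$.

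To produce such a $\beta$, I will first record that $g$ is (strictly) increasing on $(-1,1)$: a direct computation gives $g'(y)=-\log(1-y)-1+\tfrac1{1+y}$ and $g''(y)=\tfrac1{1-y}-\tfrac1{(1+y)^2}=\tfrac{y(3+y)}{(1-y)(1+y)^2}$, so $g''$ has the sign of $y$, whence $g'$ attains its minimum value $g'(0)=0$ at $0$ and $g'\ge0$ with equality only there. The natural candidate is the unique $\beta$ making $\ell(-b)=g(-b)$, namely $\beta=\bigl(\log(1-b^3)-g(-b)\bigr)/b$. The point of this choice is that $h(y)\coloneqq g(y)-\beta y$ is strictly increasing as soon as $\beta\le0$, since then $h'=g'-\beta\ge g'\ge0$ vanishes at most at one point; and strict monotonicity together with the identity $h(-b)=g(-b)+\beta b=\log(1-b^3)$ is exactly the statement that $\ell(y)\le g(y)$ on $[-b,b]$ with equality only at $-b$. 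So everything reduces to verifying $\beta\le0$.

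The inequality $\beta\le0$ is equivalent to $g(-b)\ge\log(1-b^3)$, and since $g(-b)=(1+b)\log(1+b)+\log(1-b)$ while $1-b^3=(1-b)(1+b+b^2)$, it is the same as
\[
  (1+b)\log(1+b)\;\ge\;\log(1+b+b^2),\qquad b\in[0,1),
\]
i.e.\ $(1+b)^{1+b}\ge 1+b+b^2$. This is the one genuine obstacle, and it is delicate near $b=0$: the two sides agree there to second order, and the gap is cubic -- which is precisely where the exponent $3$ in $1-b^3$ comes from. I would prove it by setting $\varphi(b)\coloneqq(1+b)\log(1+b)-\log(1+b+b^2)$, noting $\varphi(0)=\varphi'(0)=0$, and showing $\varphi''>0$ on $(0,1)$; after clearing positive denominators this amounts to $(1+b+b^2)^2-(1+b)(1-2b-2b^2)=b\,(3+7b+4b^2+b^3)\ge0$, which is manifest for $b\ge0$. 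Hence $\varphi>0$ on $(0,1)$, so $\beta<0$, and the first paragraph closes the argument. (Equivalently, and in the probabilistic language of the paper: viewing $f$ as a random variable $Y$ valued in $[-b,b]$ with $\mathbb{E}Y=0$, the affine minorant gives $\mathbb{E}\,g(Y)\ge\mathbb{E}\,\ell(Y)=\log(1-b^3)$, the strictness coming from $\Pr(Y>-b)>0$.)
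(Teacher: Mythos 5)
Your proof is correct, and it takes a genuinely different route from the paper's. The paper recasts the left-hand side as $\mathbb{E}[g(Y)]$ for a mean-zero random variable $Y$ supported on $[-b,b]$, invokes extremal results on moment sets (Hoeffding, Karr, Pinelis) to reduce to distributions supported on at most two points, and then disposes of the resulting two-parameter family by explicit case analysis (the functions $R$, $F$, $H$ and their first two derivatives). You instead exhibit a single affine minorant $\ell(y)=\log(1-b^3)+\beta y$ of $g$ on $[-b,b]$, touching only at $y=-b$, integrate it, and kill the linear term with $\int_0^1 f=0$; this is precisely the dual (Lagrangian) certificate for the paper's primal extremal problem, and it collapses everything to the single scalar inequality $(1+b)^{1+b}\geqslant 1+b+b^2$, which you verify by the same second-derivative technique the paper applies to $R$ (value and first derivative vanish at $0$, second derivative positive). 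Your computations check out: $g''(y)=y(3+y)/\big((1-y)(1+y)^2\big)$ does give $g'\geqslant 0$ with equality only at $0$, and $(1+b+b^2)^2-(1+b)(1-2b-2b^2)=b(3+7b+4b^2+b^3)$ is correct. What each approach buys: yours is shorter and fully self-contained (no appeal to extremal-point machinery), at the cost of not locating the extremal distribution; the paper's two-point reduction is the route one would take to identify the sharp constant, which your certificate---touching $g$ only where the mean-zero constraint cannot be met---shows is strictly larger than $\log(1-b^3)$. Two cosmetic points: your $\beta$ collides with the paper's symbol for the speed (harmless inside the proof, but worth renaming), and for $b=0$ the stated strict inequality degenerates to an equality, an edge case both proofs implicitly exclude and which you rightly flag.
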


Let us first show that the inequality \eqref{eq:capacity5} is indeed stronger
than \eqref{eq:capacity4}.

\begin{lemma}
\label{thm:p}
For every $ \beta \in (0, 1) $, $ b \in [0, \beta] $, and
$ \sigma  \geqslant  \max\!\big\{ \frac{1}{\gamma} \frac{b}{1 - b} , \, \gamma b \big\} $,
where $ \gamma = (1-\beta^2)^{-1/2} $, we have
\begin{equation}
\label{eq:p}
  \log(1 - b^3)  \geqslant  \log\frac{\gamma^{-1} \sigma + 1}{\gamma \sigma + 1} .
\end{equation}
\end{lemma}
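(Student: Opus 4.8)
The plan is to strip off the logarithms and reduce \eqref{eq:p} to an elementary algebraic inequality in $\sigma$, $b$, and $\gamma$. Since $\log$ is strictly increasing, \eqref{eq:p} is equivalent to
\[
  1 - b^{3} \;\geq\; \frac{\gamma^{-1}\sigma + 1}{\gamma\sigma + 1} .
\]
Here $\gamma \geq 1$, so the denominator is positive and the fraction is at most $1$; in particular the inequality holds trivially when $b = 0$ (in which case the constraint on $\sigma$ is vacuous), so I would dispose of that case first and then assume $b \in (0,\beta]$.

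Next I would clear the positive denominator and expand. After cancelling the common terms, the inequality becomes
\[
  \sigma\big(\gamma - b^{3}\gamma - \gamma^{-1}\big) \;\geq\; b^{3} .
\]
The one nonroutine observation is the identity $\gamma - \gamma^{-1} = \beta^{2}\gamma$, which follows from $\gamma^{2} - 1 = \beta^{2}\gamma^{2}$; with it the bracket equals $\gamma(\beta^{2} - b^{3})$. Because $b \leq \beta < 1$ forces $b^{3} < b^{2} \leq \beta^{2}$, this bracket is strictly positive, so the inequality is equivalent to the explicit lower bound $\sigma \geq \frac{b^{3}}{\gamma(\beta^{2} - b^{3})}$.

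It then remains only to verify that the hypothesis on $\sigma$ (cf.\ \eqref{eq:pnw2}) implies this bound, and for that it suffices to dominate the right-hand side by just one of the two terms in $\max\!\big\{\frac{1}{\gamma}\frac{b}{1-b},\,\gamma b\big\}$. Taking the first term, $\frac{1}{\gamma}\frac{b}{1-b} \geq \frac{b^{3}}{\gamma(\beta^{2}-b^{3})}$ simplifies, after multiplying through by $\gamma$ and dividing by $b>0$, to $\beta^{2} - b^{3} \geq b^{2}(1-b)$, i.e.\ $\beta^{2} \geq b^{2}$, which is exactly the standing assumption $b \leq \beta$. Reading the equivalences backwards yields \eqref{eq:p}.

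I do not expect any serious obstacle here; the only trap is that a crude estimate of $\gamma$ — for instance trying the $\gamma b$ branch of the $\max$ together with $\gamma^{2} \geq (1-b^{2})^{-1}$ — is too lossy and fails. Keeping $\gamma$ symbolic is essential, since it cancels exactly, and among the two constraints on $\sigma$ it is the $\frac{1}{\gamma}\frac{b}{1-b}$ branch, not the $\gamma b$ branch, that is tight enough to close the argument.
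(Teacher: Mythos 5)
Your proof is correct, and it takes a slightly different route from the paper's. The paper observes that $\frac{\gamma^{-1}\sigma+1}{\gamma\sigma+1}$ is decreasing in $\sigma$, evaluates it at the smallest admissible $\sigma=\max\bigl\{\frac{1}{\gamma}\frac{b}{1-b},\,\gamma b\bigr\}$, and then splits into two cases according to which term attains the maximum (i.e.\ whether $b<\beta^2$ or not), bounding the resulting expression by $1-b^2$ in one case and by $1-b\beta^2$ in the other, each of which is $\leqslant 1-b^3$. You instead solve the inequality for $\sigma$ exactly: using $\gamma-\gamma^{-1}=\beta^2\gamma$ you reduce \eqref{eq:p} to the explicit threshold $\sigma\geqslant\frac{b^3}{\gamma(\beta^2-b^3)}$, and then verify that the single branch $\frac{1}{\gamma}\frac{b}{1-b}$ of the $\max$ already dominates this threshold (which is all you need, since $\sigma$ exceeds the $\max$ and hence each branch). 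This avoids the case analysis entirely, and your side remark is accurate: the $\gamma b$ branch alone would not suffice (for $b=\beta$ it would require $\beta^2\geqslant\beta$), so identifying the $\frac{1}{\gamma}\frac{b}{1-b}$ branch as the operative one is the right move. What the paper's case split buys in exchange is the sharper intermediate constants $1-b^2$ and $1-b\beta^2$ and an explicit identification of the unique equality case ($b=\beta$, $\sigma=\frac{1}{\gamma}\frac{b}{1-b}$); your chain recovers the equality case implicitly (equality forces $\beta=b$ and $\sigma=\frac{1}{\gamma}\frac{b}{1-b}$) but you do not state it, which is fine since the lemma only asserts the non-strict inequality.
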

\begin{proof}[Proof of Lemma \ref{thm:p}]
Since $ \gamma > 1 $, the expression $ \frac{\gamma^{-1} \sigma + 1}{\gamma \sigma + 1} $
is a monotonically decreasing function of $ \sigma $, so it is enough to show
that $ 1 - b^3  \geqslant \frac{\gamma^{-1} \sigma + 1}{\gamma \sigma + 1} $
for the smallest admissible value of $ \sigma $, which is
$ \max\!\big\{ \frac{1}{\gamma} \frac{b}{1 - b} , \, \gamma b \big\} $.
Consider two cases:
\begin{enumerate}[leftmargin=*]
\item
If $ \gamma b > \frac{1}{\gamma}\frac{b}{1 - b} $, i.e., $ b < \beta^2 $, then
\begin{multline}
\label{eq:case1}
  \frac{\gamma^{-1} \sigma + 1}{\gamma \sigma + 1}
	  \leqslant  \frac{\gamma^{-1} \gamma b + 1}{\gamma \gamma b + 1}
		=          \frac{b + 1}{\frac{1}{1 - \beta^2} b + 1}
		<          \frac{b + 1}{\frac{1}{1 - b} b + 1}  \\
		=          1 - b^2
		<          1 - b^3 .
\end{multline}
\item
If $ \gamma b \leqslant \frac{1}{\gamma}\frac{b}{1 - b} $, then
\begin{multline}
\label{eq:case2}
  \frac{\gamma^{-1} \sigma + 1}{\gamma \sigma + 1}
	  \leqslant  \frac{\frac{1}{\gamma} \frac{1}{\gamma}\frac{b}{1 - b} + 1}{\gamma \frac{1}{\gamma}\frac{b}{1 - b} + 1}
		=          \frac{(1 - \beta^2)\frac{b}{1 - b} + 1}{\frac{1}{1 - b}}  \\
		=          1 - b \beta^2
		\leqslant  1 - b^3 .
\end{multline}
\end{enumerate}
The inequality \eqref{eq:p} is thus proved.
It is also evident from \eqref{eq:case2} that equality holds in \eqref{eq:p}
only in the boundary case when $ b = \beta $ and
$ \sigma = \frac{1}{\gamma} \frac{b}{1 - b} = \beta\big(\frac{1+\beta}{1-\beta}\big)^{1/2} $.
\end{proof}

\begin{remark}
\label{rem:KL}
\textnormal{
The inequality stated in Theorem \ref{thm:inequality} is stronger than
what is needed to conclude that $ \overline{C}_A > \overline{C}_B $ not
only because the constant on the right-hand side is
$ \geqslant\!\log\frac{\gamma^{-1} \sigma + 1}{\gamma \sigma + 1} $, but
because the inequality holds for \emph{any} function $ f $ satisfying
$ |f(x)| \leqslant b < 1 $ and $ \int_{0}^{1} f(x) dx = 0 $, regardless
of whether or not this function may represent the radial component
of the velocity along some closed trajectory.}

\textnormal{
Note also that, under the stated conditions, $ 1 - f $ and $ 1 + f $ are
probability densities on $ [0, 1] $, so the inequality \eqref{eq:capacity5}
can also be written in the form
\begin{align}
\label{eq:KL}
  D_{\textsc{kl}}(1-f \parallel 1) - D_{\textsc{kl}}(1 \parallel 1+f)  \,>\,  \log(1 - b^3) ,
\end{align}
which may be of separate interest in information theory.
Here $ D_{\textsc{kl}}(\cdot\!\parallel\!\cdot) $ denotes the Kullback--Leibler
divergence (relative entropy).
\myqed}
\end{remark}

\begin{proof}[Proof of Theorem \ref{thm:inequality}]
We first restate the problem in probabilistic language.
Namely, the inequality \eqref{eq:capacity5} can be rewritten (or, in fact,
generalized) as 
\begin{equation}
\label{eq:E}
	\mathbb{E}[g(Y)]  >  \ln(1 - b^3) ,
\end{equation}
where 
\begin{equation}
\label{eq:g}
  g(x) \coloneqq \ln(1+x) + (1-x)\ln(1-x)
\end{equation}
and $ Y $ is a random variable such that 
\begin{equation}
\label{eq:PY}
	\mathbb{P}(|Y| \leqslant b) = 1  \quad\text{and} \quad  \mathbb{E}[Y] = 0 .
\end{equation}
To see this, take $ Y \coloneqq f(X) $, where $ X $ is a random variable
uniformly distributed on $ [0, 1] $.

By well-known results in probability theory \cite{hoeffding} (see also
\cite{karr} or \cite[Cor.\ 13]{pinelis}), it is sufficient to prove the
inequality \eqref{eq:E} for random variables $ Y $ whose distributions
are supported on sets of cardinality $ \leqslant\!2 $.
In other words, without loss of generality, we may assume that the distribution
of $ Y $ is supported on a set $ \{ u, v\} \subset [-b, b] $.
Given such $ u $ and $ v $, if $ b $ is now replaced by the smallest possible
value of $ b_0 $ such that $ \{u, v\} \subseteq [-b_0, b_0] $, the left-hand
side of \eqref{eq:E} will not change whereas the right-hand side will not
decrease.
This means that, without loss of generality, we may further assume that
one of the values $ u $ or $ v $ coincides with one of the endpoints of
the interval $ [-b, b] $; that is, either $ v = -b $ or $ v = b $.
To cover both of these cases at once, let us assume that the distribution
of $ Y $ is supported on a set of the form $ \{ -b, u, b \} $ with $ u \in (-b, b) $,
and denote $ p \coloneqq \mathbb{P}(Y = -b) $, $ q \coloneqq \mathbb{P}(Y = u) $,
$ r \coloneqq \mathbb{P}(Y = b) $.

By what was said in the previous paragraph, it remains to show that  
\begin{equation}
\label{eq:L}
	L \coloneqq pg(-b) + qg(u) + rg(b) - \ln(1-b^3)  >  0
\end{equation}
given the conditions
\begin{subequations}
\label{eq:pqr}
\begin{align}
	-b <\ &u < b ,  \\
\label{eq:pqr2}
	p, q, r \geqslant 0 , \quad  &p + q + r = 1 ,  \\
\label{eq:pqr3}
	-pb + qu &+ rb = 0 .
\end{align}
\end{subequations}
Solving \eqref{eq:pqr2} and \eqref{eq:pqr3} for $ p $ and $ r $ gives
$ p = \frac{1}{2}\big(1 - q\big(1 - \frac{u}{b}\big)\big) $,
$ r = \frac{1}{2}\big(1 - q\big(1 + \frac{u}{b}\big)\big) $,
and after substituting these expressions into \eqref{eq:L} we get 
\begin{equation}
	L = M(q,b,u) \coloneqq R(b) + qS(b,u) ,
\end{equation}
where 
\begin{align}
	R(b)   &\coloneqq  \ln\!\left(1-b^2\right) - \ln\!\left(1-b^3\right) + \frac{b}{2} \ln\frac{1+b}{1-b}  \\ 
	S(b,u) &\coloneqq  g(u) - (1-u/2) \log\!\left(1-b^2\right) - \frac{b}{2} \ln\frac{1+b}{1-b} .
\end{align}
Since $ M(q,b,u) $ is affine in the variable $ q $, in order to prove
\eqref{eq:L} under the conditions \eqref{eq:pqr} it suffices to show that 
\begin{equation}
\label{eq:M}
	M(0,b,u) > 0  \quad\text{and}\quad  M(1,b,u) > 0
\end{equation}
for $ u \in (-b, b) $ and $ b \in (0, 1) $. 

We have $ M(0,b,u) = R(b) $.
The derivatives of this function are given by
\begin{align}
\label{eq:50}
  R'(b)  &=  -\frac{b}{1-b^2} + \frac{3b^2}{1-b^3} + \frac{1}{2} \ln\frac{1+b}{1-b} ,  \\
\label{eq:60}
	R''(b) &=  \frac{b \left(6+10 b+2 b^2-3 b^3+2 b^4+b^5\right)}{(1-b)^2 (1+b)^2 \left(1+b+b^2\right)^2} .
\end{align}
Obviously, $ R''(b) > 0 $ for $ b \in (0, 1) $, so $ R(b) $ is convex.
Additionally, $ R(0) = R'(0) = 0 $, so it must be the case that $ M(0,b,u) = R(b) > 0 $
for any $ b \in (0, 1) $. 

Next, 
\begin{multline}
	M(1,b,u)
	 =          F(b)
	 \coloneqq  F(b,u)  \\
	 \coloneqq  g(u) - \ln\!\left(1 - b^3\right) + \frac{u}{2} \ln\!\left(1 - b^2\right) , 
\end{multline}
Differentiating with respect to $ b $ we find that
\begin{equation}
\begin{aligned}
	F'(b) &\cdot b^{-1} (1 - b) (1 + b) (1 + b + b^2)  \\
	 &= 3 b (1 + b) - (1 + b + b^2) u  \\ 
	 &\geqslant 3 b (1 + b) - (1 + b + b^2) b  \\
	 &= b (2 + 2 b - b^2)  \\
	 &> 0 . 
\end{aligned}
\end{equation}
So, $ F $ is increasing in $ b $ and hence $ M(1,b,u) = F(b,u) > F(|u|,u) $,
because $ |u| < b $.
If now $ u \in [0, 1) $, then $ F(|u|,u) = F(u,u) = R(u) \geqslant 0 $, by
what was shown in the previous paragraph. 
Finally, if $ u \in (-1, 0) $, then
$ F(|u|,u) = F(-u,u) = H(u) \coloneqq g(u) + \frac{u}{2} \ln(1 - u^2) - \ln(1 + u^3) $.
By calculating the derivatives of this function, similarly as in
\eqref{eq:50}--\eqref{eq:60}, we find that
$ H(0) = H'(0) = 0 $ and $ H''(u) = R''(|u|) > 0 $.
So, $ M(1,b,u) = F(b,u) > F(|u|,u) = F(-u,u) = H(u) > 0 $. 

Therefore, \eqref{eq:M} holds.
The proof is complete.
\end{proof}

Finally, we note that the difference in maximum rates becomes larger as
Bob's speed grows.
In fact, as $ \beta \to 1 $, we have $ \overline{C}_A \to \infty $ and
$ \overline{C}_B \to 0 $.

\section{Concluding remarks}

We close the paper with a brief discussion on the more general form of the
Jarett--Cover conjecture corresponding to the case when Bob's speed is given
by an arbitrary function $ \beta(\tau) \colon [0, T_B] \to [0, 1) $.
This function may, for physical reasons, be assumed to be continuous, although
we believe the conjecture is true for any function nice enough for all the
relevant integrals to exist.

Unfortunately, the proof presented in the previous section does not translate
to this setting in a straightforward way.
In an attempt to resolve it, it may be instructive to focus first on special
cases, such as the purely radial motion (when $ |\beta_r(\tau)| = \beta(\tau) $),
or the circular motion around Alice (when $ \beta_r(\tau) = 0 $).
Another special case worth investigating is the one where the transmitter is
unrestricted in terms of the frequency bandwidth it may use ($ W = \infty $),
in which case the expressions for the maximum transmission rates take a simpler
form:%
\begin{subequations}
\label{eq:capinf}
\begin{align}
\label{eq:capinfA}
	 \lim_{W \to \infty} \overline{C}_A
	  &= \frac{\log e}{\eta} \frac{1}{T_A} \int_{0}^{T_A} \alpha_A(t) P_A(t) \,dt ,  \\
\label{eq:capinfB}
	 \lim_{W \to \infty} \overline{C}_B
	  &= \frac{\log e}{\eta} \frac{1}{T_B} \int_{0}^{T_B} \alpha_B(\tau) P_B(\tau) \,d\tau .
\end{align}
\end{subequations}
As before, $ P_A(t), P_B(\tau) $ are nonnegative functions with average
value $ \overline{P} $ that maximize \eqref{eq:capinfA} and \eqref{eq:capinfB},
respectively, and $ \alpha_A(t), \alpha_B(\tau) $ are given by \eqref{eq:doppler}.
Again, the question we are asking is whether the quantity in \eqref{eq:capinfA}
is larger that that in \eqref{eq:capinfB}, for any $ \overline{P} $ and $ \eta $.
Note that, since the capacities $ \overline{C}_A $ and $ \overline{C}_B $
are continuous functions of the bandwidth $ W $, this would automatically
imply that $ \overline{C}_A > \overline{C}_B $ for any ``large enough'' $ W $.

\vspace{5mm}
\textbf{Acknowledgment:}
This research was funded by the European Research Council (ERC) through the
European Union's Horizon 2020 Research and Innovation programme (Grant agreement no.~101003431).
M. Kova\v{c}evi\'{c} was also supported by the Science Fund of the Republic of
Serbia through the Serbian Science and Diaspora Collaboration Program (project
no.~6473027), and by the Secretariat for Higher Education and Scientific Research
of the Autonomous Province of Vojvodina (project no.~142-451-2686/2021).

%\newpage

\end{document}